\newcommand\Oh{\mathcal{O}}
\newcommand\pop{\mathsf{pop}}
\begin{document}
\title{Optimal Skeleton Huffman Trees Revisited\thanks{Supported by the Russian Science Foundation (RSF), project 18-71-00002.}}
\author{Dmitry Kosolobov\inst{1} \and Oleg Merkurev\inst{1}}
\authorrunning{D. Kosolobov and O. Merkurev}
\institute{Ural Federal University, Ekaterinburg, Russia \email{dkosolobov@mail.ru},~\email{o.merkuryev@gmail.com}}
\maketitle
\begin{abstract}
A skeleton Huffman tree is a Huffman tree in which all disjoint maximal perfect subtrees are shrunk into leaves. Skeleton Huffman trees, besides saving storage space, are also used for faster decoding and for speeding up Huffman-shaped wavelet trees. In~2017 Klein et al.\ introduced an optimal skeleton tree: for given symbol frequencies, it has the least number of nodes among all optimal prefix-free code trees (not necessarily Huffman's) with shrunk perfect subtrees. Klein et al.\ described a simple algorithm that, for fixed codeword lengths, finds a skeleton tree with the least number of nodes; with this algorithm one can process each set of optimal codeword lengths to find an optimal skeleton tree. However, there are exponentially many such sets in the worst case. We describe an $\Oh(n^2\log n)$-time algorithm that, given $n$ symbol frequencies, constructs an optimal skeleton tree and its corresponding optimal code.

\keywords{Huffman tree \and skeleton tree \and dynamic programming.}
\end{abstract}

\section{Introduction}

The Huffman code~\cite{Huffman} is one of the most fundamental primitives of data compression. Numerous papers are devoted to Huffman codes and their variations; see the surveys \cite{Abrahams} and~\cite{Moffat}. In this paper we investigate the skeleton Huffman trees, introduced by Klein~\cite{Klein}, which are code trees with all (disjoint) maximal perfect subtrees shrunk into leaves (precise definitions follow). We describe the first polynomial algorithm that, for code weights $w_1, w_2, \ldots, w_n$, constructs the smallest in the number of nodes skeleton tree among all trees formed by optimal prefix-free codes. Klein et al.~\cite{KleinEtAl} called such trees \emph{optimal skeleton trees}.

The idea of the skeleton tree is simple: to determine the length of a given codeword in the input bit stream, the standard decoding algorithm for Huffman codes descends in the tree from the root to a leaf; instead, one can descend to a leaf of the skeleton tree, where the remaining length is uniquely determined by the height of the corresponding shrunk perfect subtree. While presently the decoding is performed by faster table methods~\cite{Moffat} and, in general, the entropy encoding is implemented using superior methods like ANS~\cite{Duda}, there are still important applications for the trees of optimal codes in compressed data structures, where one has to perform the tree descending. For instance, two such applications are in compressed pattern matching~\cite{DaptardarShapira} and in the so-called Huffman-shaped wavelet trees~\cite{GrossiGuptaVitter,MakinenNavarro2}, the basis of FM-indexes~\cite{FerraginaManzini}: the access time to the FM-index might be decreased by skeleton trees in exchange to slower search operations~\cite{BaruchKleinShapira}.

The skeleton trees were initially introduced only for canonical trees (in which the depths of leaves do not decrease when listed from left to right),~as~they~showed good performance in practice~\cite{Klein}. However, as it was noticed in~\cite{KleinEtAl}, the smallest skeleton trees might be induced by neither Huffman nor canonical trees. In order to find optimal skeleton trees, Klein et al.~\cite{KleinEtAl} described a simple algorithm that, for fixed codeword lengths, builds a skeleton tree with the least number of nodes. As a consequence, to find an optimal skeleton tree, one can process each set of optimal codeword lengths using this algorithm. However, there are exponentially many such sets in the worst case \cite{Golomb} and, hence, such algorithm is not polynomial.

To develop a polynomial algorithm, we first prove that it suffices to consider only Huffman codes, not all optimal codes. Then, we investigate properties of Huffman trees resembling the known sibling property~\cite{Gallager}. It turns out that all Huffman trees, for fixed code weights $w_1, w_2, \ldots, w_n$, share a similar layered structure and our dynamic programming $\Oh(n^2\log n)$-time algorithm is based on it. Since normally in practice there are few choices of optimal codeword lengths for given weights, our result is mostly of theoretical value but the found properties, we believe, might be interesting by themselves.

The paper is organized as follows. In Section~\ref{sec:huffman-skeleton} we define all basic concepts and overview some known results. In Section~\ref{sec:layered-struct} we explore the layered structure of Huffman trees that underlies our construction. Section~\ref{sec:algorithm} begins with a simpler cubic algorithm and, then, proceeds to improve its time to $\Oh(n^2\log n)$.

\section{Huffman and Skeleton Trees}
\label{sec:huffman-skeleton}

Throughout the paper, all trees are rooted and binary. A tree is \emph{full} if all its nodes have either two or zero children. The \emph{depth} of a node is the length of the path from the root to the node. A \emph{subtree} rooted at a node is the tree consisting of the node and all its descendants. A \emph{perfect tree} is a full tree in which all leaves have the same depth. A perfect subtree is \emph{maximal} if it is not a subtree of another perfect subtree. Two subtrees are \emph{disjoint} if they do not share common nodes. To \emph{shrink a subtree} is to remove all nodes of the subtree except its root.

Fix $n$ symbols with positive weights $w_1, w_2, \ldots, w_n$ (typically, symbol frequencies). Their \emph{prefix-free code} is a sequence of $n$ binary \emph{codewords} such that no codeword is a prefix of another codeword. Denote by $\ell_1, \ell_2, \ldots, \ell_n$ the codeword lengths. The \emph{quantized source} or \emph{q-source}, as defined in~\cite{FergusonRabinowitz}, is a sequence $q_1, q_2, \ldots, q_m$ such that, for each $\ell$, $q_\ell$ is the number of codewords of length $\ell$ and all lengths are at most $m$. Note that $\sum_{\ell=1}^m q_\ell = n$. It is well known that the lengths and the q-source of any prefix-free code satisfy the Kraft's inequality~\cite{Kraft}:
\begin{equation}\label{eq:kraft}
\sum_{i=1}^n \frac{1}{2^{\ell_i}} = \sum_{\ell=1}^m \frac{q_\ell}{2^\ell} \le 1.
\end{equation}
Conversely, every set of lengths and every q-source satisfying (\ref{eq:kraft}) correspond to a (non-unique) prefix-free code. The code is \emph{optimal} if the sum $\sum_{i=1}^n \ell_i w_i$ is minimal among all possible prefix-free codes. In particular, when $w_1, w_2, \ldots, w_n$ are symbol frequencies in a message, optimal prefix-free codes minimize the total length of the encoded message obtained by the codeword substitution.

We assume that the reader is familiar with Huffman's algorithm~\cite{Huffman} (either its version with heap or queue~\cite{vanLeeuwen}): given positive weights $w_1, w_2, \ldots, w_n$, it builds a full tree whose $n$ leaves are labelled with $w_1, w_2, \ldots, w_n$ and each internal node is labelled with the sum of weights of its children. This tree represents an optimal prefix-free code for the weights: all left edges of the tree are marked with zero, all right edges with one, and each $w_i$ is associated with the codeword written on the corresponding root-leaf path. The obtained tree is not necessarily unique: swapping siblings and arbitrarily breaking ties during the construction (some weights or sums of weights might be equal), one can build many different trees, even exponentially many in some cases as shown by Golomb~\cite{Golomb}; see Fig.~\ref{fig:valid-huffman}. We call all such trees for $w_1, w_2, \ldots, w_n$ \emph{Huffman trees} and they represent \emph{Huffman codes}. The order of siblings and the choice among equal-weighted nodes are the only ties emerging in Huffman's algorithm and leading to different trees. In view of this, the following lemma is straightforward.

\begin{figure}[!htb]
\vskip-2mm
\centering
\subfigure{
\begin{tikzpicture}[font=\tiny,level 1/.style={sibling distance=10mm},level 2/.style={sibling distance=10mm},
                    level 3/.style={sibling distance=8mm},level 4/.style={sibling distance=8mm},level 5/.style={sibling distance=7mm},
                    level distance=0.5cm]
  \node [circle,draw] {27}
    child { node [circle,draw] {18}
      child { node [circle,draw] {9}
        child { node [circle,draw] {6}
          child { node [circle,draw] {3}
            child { node [circle,draw] {2}
              child { node [circle,draw] {1} edge from parent node[above left] {\tiny 0} }
              child { node [circle,draw] {1} edge from parent node[above right] {\tiny 1} }
              edge from parent node[above left] {\tiny 0} }
            child { node [circle,draw] {1} edge from parent node[above right] {\tiny 1} }
            edge from parent node[above left] {\tiny 0} }
          child { node [circle,draw] {3} edge from parent node[above right] {\tiny 1} }
          edge from parent node[above left] {\tiny 0} }
        child { node [circle,draw] {3} edge from parent node[above right] {\tiny 1} }
        edge from parent node[above left] {\tiny 0} }
      child { node [circle,draw] {9} edge from parent node[above right] {\tiny 1} }
      edge from parent node[above left] {\tiny 0} }
    child { node [circle,draw] {9} edge from parent node[above right] {\tiny 1} };
\end{tikzpicture}}\hspace{1cm}
\subfigure{
\begin{tikzpicture}[font=\tiny,level/.style={sibling distance=32mm/#1, level distance=0.7cm}]
  \node [circle,draw] {27}
    child { node [circle,draw] {18}
      child { node [circle,draw] {9} edge from parent node[left] {\tiny 0} }
      child { node [circle,draw] {9} edge from parent node[right] {\tiny 1} }
      edge from parent node[left] {\tiny 0} }
    child { node [circle,draw] {9}
      child { node [circle,draw] {6}
        child { node [circle,draw] {3} edge from parent node[left] {\tiny 0} }
        child { node [circle,draw] {3} edge from parent node[right] {\tiny 1} }
        edge from parent node[left] {\tiny 0} }
      child { node [circle,draw] {3}
        child { node [circle,draw] {2}
          child { node [circle,draw] {1} edge from parent node[left] {\tiny 0} }
          child { node [circle,draw] {1} edge from parent node[right] {\tiny 1} }
          edge from parent node[left] {\tiny 0} }
        child { node [circle,draw] {1} edge from parent node[right] {\tiny 1} }
        edge from parent node[right] {\tiny 1} }
      edge from parent node[right] {\tiny 1} };
\end{tikzpicture}}
\caption{Two Huffman trees for the weights $1,1,1,3,3,9,9$.}\label{fig:valid-huffman}
\vskip-1mm
\end{figure}

\begin{lemma}\label{lem:valid-huffman}
By swapping subtrees rooted at equal-weighted nodes or swapping siblings in a Huffman tree for $w_1, w_2, \ldots, w_n$, one obtains another Huffman tree and all Huffman trees for $w_1, w_2, \ldots, w_n$ can be reached by these operations.
\end{lemma}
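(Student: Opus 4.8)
The plan is to split the statement into a \emph{closure} part (applying either operation to a Huffman tree yields a Huffman tree) and a \emph{reachability} part (any Huffman tree is obtained from any other by a sequence of these operations), and to handle them separately. For closure I would invoke Gallager's sibling property~\cite{Gallager}: a full weighted binary tree is a Huffman tree for the multiset of its leaf weights if and only if its $2n-1$ nodes admit a listing in non-increasing order of weight in which every node is adjacent to its sibling. A sibling swap alters neither the nodes, nor their weights, nor the set of unordered sibling pairs, so any witnessing list for $T$ still witnesses the swapped tree. For a swap of the subtrees rooted at equal-weight nodes $u$ and $v$, I would first note that an internal node is strictly heavier than either child (all weights are positive), so neither of $u,v$ is an ancestor of the other, the two subtrees are disjoint, the swap moves whole subtrees and leaves the multiset of node weights fixed, and it changes the sibling pairs only by turning $\{u,s_u\}$ and $\{v,s_v\}$ into $\{v,s_u\}$ and $\{u,s_v\}$, where $s_u,s_v$ are the former siblings of $u,v$; since $w(u)=w(v)$, exchanging the list positions of $u$ and $v$ in a witnessing list for $T$ produces a witnessing list for the swapped tree. (One can instead argue closure straight from the algorithm, a sibling swap being a different left/right assignment and an equal-weight subtree swap a different selection among equal-weight nodes, but then merges of a run must be reordered, which is fussier to write out.)

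For reachability the plan is induction on $n$, with $n\le 2$ immediate. For $n\ge 3$ I would use the fact that the first merge of every run of Huffman's algorithm combines two leaves whose weights are the two smallest values $w_{(1)},w_{(2)}$ of the multiset (counted with multiplicity) into a ``cherry'' that persists in the output, so every Huffman tree contains such a cherry. Given Huffman trees $T_1,T_2$, pick a cherry $\{a',b'\}$ of weights $w_{(1)},w_{(2)}$ in $T_2$. Because transposing two equal-weight leaves is a special equal-weight subtree swap, and such transpositions generate every weight-preserving permutation of the leaves, I can slide $a'$ and $b'$ into the two leaf positions occupied by a cherry of $T_1$, transforming $T_1$ (through Huffman trees, by closure) into a Huffman tree $\tilde T_1$ that has $\{a',b'\}$ as a cherry. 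Now contract $\{a',b'\}$ to a single leaf of weight $w_{(1)}+w_{(2)}$ in both $\tilde T_1$ and $T_2$: the results are Huffman trees for the \emph{same} $(n-1)$-element multiset, so by the induction hypothesis a sequence of operations takes one to the other, and re-expanding the contracted leaf into the cherry wherever it momentarily sits turns each sibling swap into a sibling swap and each equal-weight subtree swap into an equal-weight subtree swap — the expansion changes no node's weight, so equal weights stay equal and no ancestor relation is created, even when the contracted leaf is an endpoint of or lies strictly inside a swapped subtree. Concatenating with the passage from $T_1$ to $\tilde T_1$ connects $T_1$ and $T_2$.

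I expect the main obstacle to be this inductive step, and within it two facts that a hasty argument would take for granted: (i) contracting a cherry of weights $w_{(1)},w_{(2)}$ in a Huffman tree yields a Huffman tree for the reduced multiset — for this I would argue that in any run, before such a cherry is formed every merge must combine two nodes of the globally minimum weight $w_{(1)}$ (a leaf of weight $w_{(1)}$, namely a cherry leaf, is still present and forces both the lightest and the second-lightest current node to have weight $w_{(1)}$), so deleting the cherry's two leaves and inserting the merged leaf spoils no later ``take the two lightest'' step; and (ii) the lifting of operations through the contraction, which is the small case analysis on the position of the contracted leaf relative to a swapped subtree indicated above. Both are routine to verify once phrased precisely — which is presumably why the paper calls the lemma straightforward — but they are where the argument actually has to do some work.
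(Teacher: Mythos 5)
Your proposal is correct, but it is a genuinely different (and far more explicit) argument than what the paper offers: the paper gives no proof at all, dismissing the lemma as straightforward from the single observation that the order of siblings and the choice among equal-weighted nodes are the only ties arising in Huffman's algorithm — i.e., its implicit route is to track tie-breaking decisions through runs of the algorithm. You instead split the claim into closure and reachability and prove each with a different tool: closure via Gallager's sibling-property characterization (valid here since all weights are positive; your check that equal-weight nodes cannot be in ancestor--descendant relation and that swapping $u,v$ in a witnessing list re-establishes the required adjacencies is sound), and reachability by induction on $n$ via contracting a cherry of the two minimal weights $w_{(1)},w_{(2)}$. You also correctly isolate the two points where work is actually needed — that before such a cherry is formed every merge in the producing run combines two weight-$w_{(1)}$ leaves, so contraction yields a Huffman tree for the reduced multiset, and that the operations lift through expansion of the contracted leaf — and both sketches go through (the only micro-detail left implicit is a possible final sibling swap to fix the left/right order of $a',b'$ after re-expansion). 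What each approach buys: the paper's tie-tracking viewpoint is shorter to state but, if written out, would require aligning two arbitrary runs of the algorithm step by step, which is exactly the fussy bookkeeping you avoid; your decomposition costs an appeal to the sibling property (or the algorithmic alternative you mention) but yields a self-contained, checkable proof.
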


By analogy to Huffman trees, each prefix-free code forms a tree with $n$ leaves labelled with $w_1, w_2, \ldots, w_n$ (and internal nodes labelled with sums of weights of their children). For a given tree, its \emph{skeleton tree}~\cite{Klein} is obtained by choosing all (disjoint) maximal perfect subtrees and then shrinking them. An \emph{optimal skeleton tree}~\cite{KleinEtAl} is a skeleton tree with the least number of nodes among all skeleton trees for the trees formed by optimal prefix-free codes. Fig.~\ref{fig:non-opt-huffman} gives an example from~\cite{KleinEtAl} showing that optimal skeleton trees are not necessarily obtained from Huffman trees: only the left tree is Huffman's, the skeleton trees are drawn in gray, and both codes are optimal (note also that not every optimal code can be obtained by Huffman's algorithm).

\subfigcapmargin = -1cm
\begin{figure}[!htb]
\centering
\subfigure[The Huffman tree.]{
\begin{tikzpicture}[font=\small,level/.style={sibling distance=25mm/#1, level distance=0.8cm}]
  \node [circle,draw,fill=lightgray] {19}
    child { node [circle,draw,fill=lightgray] {8} 
      child { node [circle,draw,fill=lightgray] {4} edge from parent node[left] {\tiny 0}  }
      child { node [circle,draw,fill=lightgray] {4}
        child { node [circle,draw] {2} edge from parent node[left] {\tiny 0} }
        child { node [circle,draw] {2} edge from parent node[right] {\tiny 1} }
        edge from parent node[right] {\tiny 1} }
      edge from parent node[left] {\tiny 0} }
    child { node [circle,draw,fill=lightgray] {11}
      child { node [circle,draw,fill=lightgray] {5} edge from parent node[left] {\tiny 0}  }
      child { node [circle,draw,fill=lightgray] {6}
        child { node [circle,draw] {3} edge from parent node[left] {\tiny 0} }
        child { node [circle,draw] {3} edge from parent node[right] {\tiny 1} }
        edge from parent node[right] {\tiny 1} }
      edge from parent node[right] {\tiny 1} };
\end{tikzpicture}}\hspace{1cm}
\subfigure[The tree inducing optimal skeleton tree.]{
\begin{tikzpicture}[font=\small,level 1/.style={sibling distance=25mm},level 2/.style={sibling distance=15mm},
                    level 3/.style={sibling distance=8mm},level distance=0.8cm]
  \node [circle,draw,fill=lightgray] {19}
    child { node [circle,draw,fill=lightgray] {9}
      child { node [circle,draw] {4} edge from parent node[left] {\tiny 0} }
      child { node [circle,draw] {5} edge from parent node[right] {\tiny 1} }
      edge from parent node[left] {\tiny 0} }
    child { node [circle,draw,fill=lightgray] {10}
      child { node [circle,draw] {4}
        child { node [circle,draw] {2} edge from parent node[left] {\tiny 0} }
        child { node [circle,draw] {2} edge from parent node[right] {\tiny 1} }
        edge from parent node[left] {\tiny 0} }
      child { node [circle,draw] {6}
        child { node [circle,draw] {3} edge from parent node[left] {\tiny 0} }
        child { node [circle,draw] {3} edge from parent node[right] {\tiny 1} }
        edge from parent node[right] {\tiny 1} }
      edge from parent node[right] {\tiny 1}};
\end{tikzpicture}}
\caption{Trees for the weights $2,2,3,3,4,5$; the gray nodes form skeleton trees.}\label{fig:non-opt-huffman}
\end{figure}

To find an optimal skeleton tree for $w_1, w_2, \ldots, w_n$, Klein et al.~\cite{KleinEtAl} addressed the following problem: given a q-source $q_1, q_2, \ldots, q_n$ of an optimal prefix-free code of size $n$, find a prefix-free code having this q-source whose tree induces a skeleton tree with the least number of nodes. Since $q_\ell$ in optimal codes can be non-zero only for $\ell < n$, we ignore $q_{n+1}, q_{n+2}, \ldots$ in the q-source specifying only $q_1,q_2,\ldots,q_n$ (though $q_n$ can be omitted too). Let us sketch the solution of~\cite{KleinEtAl}.

Denote by $\pop(x)$ the number of ones in the binary representation of integer $x$. Since $q_1, q_2, \ldots, q_n$ is the q-source of an optimal code, (\ref{eq:kraft}) is an equality rather than inequality: $\sum_{\ell=1}^n \frac{q_\ell}{2^\ell} = 1$. Hence, any resulting code tree is full and so is its skeleton tree. Consider such skeleton tree. As it is full, the smallest such tree has the least number of leaves. Each of its leaves is a shrunk perfect subtree (possibly, one-node subtree). Split all $q_\ell$ depth-$\ell$ leaves of the corresponding code tree into $r$ subsets, each of which is a power of two in size and consists of all leaves of a shrunk subtree. Then, $r \ge k$ if $q_\ell = \sum_{i=1}^k 2^{m_i}$ for some $m_1 > m_2 > \cdots > m_k$, i.e., $r \ge \pop(q_\ell)$. Therefore, the skeleton tree has at least $\sum_{\ell=1}^n \pop(q_\ell)$ leaves.

The bound is attainable. Shrinking a perfect subtree with $2^m$ depth-$\ell$ leaves, we decrease $q_\ell$ by $2^m$ and increment $q_{\ell - m}$ by $1$, which does not affect the sum~(\ref{eq:kraft}) since $\frac{1}{2^{\ell - m}} = \frac{2^m}{2^\ell}$. Based on this, we initialize with zeros some $q'_1, q'_2, \ldots, q'_n$ and, for $\ell \in \{1,2,\ldots,n\}$, increment $q'_{\ell - m_1}, \ldots, q'_{\ell - m_k}$, where $q_\ell = \sum_{i=1}^k 2^{m_i}$ is the binary representation of $q_\ell$. In the end, $q'_1, q'_2, \ldots, q'_n$ satisfy~(\ref{eq:kraft}) and $\sum_{\ell=1}^n q'_\ell = \sum_{\ell=1}^n \pop(q_\ell)$. By a standard method, we build a full tree having, for each $\ell$, $q'_\ell$ depth-$\ell$ leaves and it is precisely the sought skeleton tree. By appropriately ``expanding'' its leaves into perfect subtrees, one can construct the corresponding code tree (see details in~\cite{KleinEtAl}). Thus, Klein et al.\ proved the following lemma.

\begin{lemma}
Let $q_1, q_2, \ldots, q_n$ be a q-source of a size-$n$ code such that $\sum_{\ell=1}^n \frac{q_\ell}{2^\ell}{=}1$. The smallest skeleton tree in the number of nodes for a tree of a prefix-free code having this q-source has $\sum_{\ell=1}^n \pop(q_\ell)$ leaves and, thus, $2\sum_{\ell=1}^n \pop(q_\ell) - 1$ nodes.\label{lem:klein-algorithm}
\end{lemma}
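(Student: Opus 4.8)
The plan is to establish the two directions separately: a lower bound showing every skeleton tree for a code with the given q-source has at least $\sum_{\ell=1}^n \pop(q_\ell)$ leaves, and a matching construction achieving that bound. Almost all of the reasoning needed is already laid out in the paragraphs preceding the statement, so the proof will mostly consolidate and formalize those observations. First I would note that since $\sum_{\ell=1}^n \frac{q_\ell}{2^\ell} = 1$, Kraft's inequality~(\ref{eq:kraft}) holds with equality, so any prefix-free code tree with this q-source is full, and therefore its skeleton tree (obtained by shrinking disjoint maximal perfect subtrees) is also full. For a full binary tree with $L$ leaves the total number of nodes is exactly $2L - 1$, which immediately reduces the problem of minimizing nodes to minimizing leaves and justifies the final "$2\sum_{\ell=1}^n \pop(q_\ell) - 1$ nodes" clause.

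For the lower bound, I would fix any prefix-free code tree $T$ with the given q-source and its skeleton tree $S$. Each leaf of $S$ corresponds to a shrunk maximal perfect subtree of $T$ (a single node counts as a perfect subtree of height $0$); these subtrees are pairwise disjoint and their leaves partition the leaves of $T$. Grouping by depth: for each $\ell$, the $q_\ell$ leaves of $T$ at depth $\ell$ are partitioned into groups, each group being the full leaf set of one shrunk perfect subtree, hence of size a power of two. If $q_\ell$ is split into $r_\ell$ such powers of two, then $r_\ell \ge \pop(q_\ell)$, since any representation of an integer as a sum of powers of two uses at least $\pop(q_\ell)$ terms (the binary representation being the unique minimal one). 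Summing over $\ell$, the number of leaves of $S$ is $\sum_\ell r_\ell \ge \sum_{\ell=1}^n \pop(q_\ell)$.

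For the upper bound I would exhibit the construction from the preceding discussion: write each $q_\ell = \sum_{i=1}^{k_\ell} 2^{m_{\ell,i}}$ in binary, initialize $q'_1,\dots,q'_n$ to zero, and for every $\ell$ and every $i$ increment $q'_{\ell - m_{\ell,i}}$. Since each bit $2^{m}$ of $q_\ell$ contributes $\frac{2^m}{2^\ell} = \frac{1}{2^{\ell-m}}$, the transformation preserves the Kraft sum, so $\sum_{\ell} \frac{q'_\ell}{2^\ell} = 1$ and $\sum_\ell q'_\ell = \sum_\ell \pop(q_\ell)$; by the converse of Kraft's inequality there is a full binary tree $S'$ having exactly $q'_\ell$ leaves at each depth $\ell$, hence $\sum_\ell \pop(q_\ell)$ leaves in total. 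It remains to check that $S'$ is in fact the skeleton tree of some code tree with the original q-source: each leaf of $S'$ at depth $\ell - m$ that was created from a bit $2^m$ of $q_\ell$ is expanded into a perfect subtree of height $m$, replacing that one leaf by $2^m$ leaves at depth $\ell$; doing this for all leaves recovers exactly $q_\ell$ leaves at each depth $\ell$, and the expanded perfect subtrees are maximal and disjoint by construction, so shrinking them returns $S'$.

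The step requiring the most care is the last one — verifying that $S'$ really is the skeleton tree of the expanded code tree, i.e. that after expansion the perfect subtrees we plant are genuinely \emph{maximal} (no larger perfect subtree swallows two of them) and that no other, unintended perfect subtree appears whose shrinking would change $S'$. This is handled by the construction in~\cite{KleinEtAl}: one can always arrange the tree so that siblings of an expanded perfect subtree are not themselves perfect of the same height (for instance by a greedy leftmost assignment), so maximality holds; the remaining bookkeeping that depths and counts match is the routine calculation sketched above, and I would cite~\cite{KleinEtAl} for the explicit expansion procedure rather than reproduce it in full.
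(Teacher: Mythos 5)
Your proposal is correct and follows essentially the same route as the paper: the lower bound by partitioning each depth-$\ell$ leaf set into power-of-two blocks (so at least $\pop(q_\ell)$ blocks), and the upper bound by the binary-decomposition of the q-source into $q'_\ell$, Kraft equality, and expansion of the resulting full tree, with the expansion details deferred to the cited reference exactly as the paper does. Your worry about maximality of the planted subtrees is in any case harmless, since a skeleton tree with fewer than $\sum_\ell \pop(q_\ell)$ leaves would contradict your own lower bound.
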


Lemma~\ref{lem:klein-algorithm} implies that, for weights admitting a Huffman tree of height $h$, any optimal skeleton tree has at most $2 h \log_2 n$ nodes. In particular, as noted in~\cite{Klein}, the skeleton tree has $\Oh(\log^2 n)$ nodes if a Huffman tree is of height $\Oh(\log n)$.

By Lemma~\ref{lem:klein-algorithm}, one can find an optimal skeleton tree for weights $w_1, w_2, \ldots, w_n$ by searching the minimum of $\sum_{\ell=1}^n \pop(q_\ell)$ among all q-sources $q_1, q_2, \ldots, q_n$ yielding optimal codes. Such algorithm is exponential in the worst case as it follows from~\cite{Golomb}. We are to develop a polynomial-time algorithm for this problem.

\section{Layered Structure of Huffman Trees}
\label{sec:layered-struct}

The following property of monotonicity in trees of optimal codes is well known.

\begin{lemma}
If the weights of nodes $u$ and $u'$ in the tree of an optimal prefix-free code are $w$ and $w'$, and $w > w'$, then the depth of $u$ is at most that of $u'$.\label{lem:heap-like}
\end{lemma}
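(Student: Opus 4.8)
The plan is to argue by contradiction using an exchange argument on the code tree. Suppose the depth $d$ of $u$ is strictly greater than the depth $d'$ of $u'$, while the weights satisfy $w > w'$. Since both $u$ and $u'$ are nodes of the same code tree $T$, consider the subtrees $T_u$ and $T_{u'}$ rooted at them. The first thing I would check is that $u$ cannot be an ancestor of $u'$ nor vice versa: if $u$ were an ancestor of $u'$, then the weight of $u$ would be at least the weight of $u'$ plus the weights of the other leaves in $T_u$, which is fine for $w>w'$, but then the depth of $u$ would be \emph{smaller} than that of $u'$, not larger; and if $u'$ were an ancestor of $u$, the weight of $u'$ would be at least $w$, contradicting $w>w'$. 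Hence $T_u$ and $T_{u'}$ are disjoint, and we may freely swap them.

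Now perform the swap: let $T'$ be the tree obtained from $T$ by exchanging the subtree $T_u$ with the subtree $T_{u'}$. This is still a valid code tree (the multiset of leaves is unchanged, and the full/prefix-free structure is preserved). I would compute the change in cost $\sum_i \ell_i w_i$. Every leaf inside $T_u$ has its depth changed by $d' - d$, and every leaf inside $T_{u'}$ has its depth changed by $d - d'$; all other leaves are unaffected. Letting $W(u)$ and $W(u')$ denote the total weights of the leaves in $T_u$ and $T_{u'}$ respectively (these equal the node labels $w$ and $w'$), the change in cost is
\[
(d' - d)\,W(u) + (d - d')\,W(u') = (d - d')\,(W(u') - W(u)) = (d - d')(w' - w).
\]
Since $d - d' > 0$ and $w' - w < 0$, this quantity is strictly negative, so $T'$ has strictly smaller cost than $T$, contradicting the optimality of the code represented by $T$. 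Therefore no such pair $u, u'$ exists, and whenever $w > w'$ the depth of $u$ is at most that of $u'$.

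The only delicate point — and the one I would make sure to state carefully — is the ancestor/descendant case analysis that justifies treating $T_u$ and $T_{u'}$ as disjoint; everything after that is a one-line cost computation. (One should also note the boundary case where $u$ or $u'$ is itself the root, which is immediately handled since the root has the maximal weight and minimal depth $0$, consistent with the claim.)
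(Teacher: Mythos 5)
Your proof is correct: the paper states this lemma without proof, citing it as a well-known monotonicity property, and your exchange argument (including the ancestor/descendant check that makes the subtree swap well defined, and the observation that a node's label equals the total leaf weight of its subtree) is precisely the standard argument one would give. The cost change $(d-d')(w'-w)<0$ indeed contradicts optimality, so nothing is missing.
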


Due to Lemma~\ref{lem:klein-algorithm}, our goal is to find a q-source $q_1, q_2, \ldots, q_n$ yielding an optimal prefix-free code for $w_1, w_2, \ldots, w_n$ that minimizes $\sum_{\ell=1}^n \pop(q_\ell)$. Huffman trees represent optimal codes; are there other optimal codes whose q-sources we have to consider? The following lemma answers this question in negative.

\begin{lemma}
Let $q_1, q_2, \ldots, q_n$ be a q-source of an optimal prefix-free code for weights $w_1, w_2, \ldots, w_n$. Then, there is a Huffman code having the same q-source.\label{lem:huffman-only}
\end{lemma}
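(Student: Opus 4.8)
The goal is to show that any q-source of an optimal code is also the q-source of some Huffman code. My plan is to induct on $n$, mimicking the structure of Huffman's algorithm itself: in a Huffman tree for $n>1$ weights, two smallest-weight symbols are siblings at maximum depth, and contracting them gives a Huffman tree for $n-1$ weights (with the two smallest replaced by their sum). I want to show that the given optimal code can, without changing its q-source, be massaged into a code whose tree has this same recursive shape.

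First I would handle the base cases ($n=1$ and $n=2$ are immediate) and then, for $n>1$, look at the tree $T$ of the given optimal code with q-source $q_1,\dots,q_n$. Let $\ell$ be the maximum codeword length, so $q_\ell\ge 1$; since $T$ is full (an optimal code tree with all leaves must be full — or I can pad to make it so, but optimality already forces equality in Kraft so $T$ is full), any deepest leaf has a sibling which is also a leaf at depth $\ell$, so $q_\ell\ge 2$. Now I invoke Lemma~\ref{lem:heap-like}: the two smallest weights $w_i,w_j$ must sit at depth $\ell$ (any node strictly shallower than a maximum-depth node has weight $\ge$ the deepest leaves' weights, so the two globally smallest weights are among the depth-$\ell$ leaves). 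Using the sibling-swap freedom — which does not change the q-source, since it only permutes which symbols occupy which leaves and leaves all depths intact — I would rearrange the depth-$\ell$ leaves so that $w_i$ and $w_j$ become siblings. This is the step where I have to be slightly careful: I need two depth-$\ell$ leaves that are siblings (guaranteed by fullness) and then argue I may relocate $w_i,w_j$ into that sibling pair; since all depth-$\ell$ leaves are interchangeable as far as the q-source is concerned, this is fine, though I should note explicitly that the code need not stay optimal after such a swap unless $w_i,w_j$ really are the two smallest — which is exactly why Lemma~\ref{lem:heap-like} is needed to place them at the bottom in the first place.

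Having made $w_i,w_j$ siblings at depth $\ell$, I contract that pair into a single leaf of weight $w_i+w_j$ at depth $\ell-1$. The resulting tree $T'$ encodes a prefix-free code for the $n-1$ weights obtained from $w_1,\dots,w_n$ by deleting $w_i,w_j$ and inserting $w_i+w_j$; a standard exchange argument shows this code is optimal for those weights (if it weren't, re-expanding an improvement would beat $T$). Its q-source $q'$ satisfies $q'_\ell=q_\ell-2$, $q'_{\ell-1}=q_{\ell-1}+1$, and $q'_k=q_k$ otherwise. By the induction hypothesis there is a Huffman tree $H'$ for the $n-1$ weights with this same q-source $q'$. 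Finally I expand: take the leaf of $H'$ carrying weight $w_i+w_j$ — it must be at depth $\ell-1$ (it is a smallest weight among the $n-1$, so by Lemma~\ref{lem:heap-like} applied to $H'$, or directly by comparing with $q'$, it lies at the deepest level $\ell-1$) — and attach two children labelled $w_i,w_j$. The result $H$ is a Huffman tree for $w_1,\dots,w_n$ (it is exactly what Huffman's algorithm produces when it first merges $w_i,w_j$ and then proceeds as for $H'$, which is reachable by Lemma~\ref{lem:valid-huffman}), and its q-source is $q'$ with $q'_{\ell-1}$ decreased by $1$ and $q'_\ell$ increased by $2$, i.e.\ exactly $q_1,\dots,q_n$.

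The main obstacle, I expect, is the bookkeeping around the leaf carrying $w_i+w_j$: I must argue both that in $T$ the pair $w_i,w_j$ can be made siblings at the maximum depth (using Lemma~\ref{lem:heap-like} plus fullness plus sibling swaps) and that in $H'$ the merged weight $w_i+w_j$ sits at the maximum depth $\ell-1$ so that re-expanding it restores precisely the original q-source. Both facts follow from Lemma~\ref{lem:heap-like} and the equality case of Kraft's inequality, but stating them cleanly — especially noting that $w_i+w_j$ need not be a unique minimum in the reduced instance, so one must use the q-source $q'$ to pin down its depth rather than a weight comparison alone — is the delicate point.
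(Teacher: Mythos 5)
Your overall strategy is exactly the paper's: induct on $n$, use Lemma~\ref{lem:heap-like} and fullness to make the two smallest weights siblings at the maximum depth $\ell$ (same-depth leaf swaps preserve the q-source and optimality), contract them, apply the induction hypothesis to the reduced instance with q-source $q'$, and re-expand the merged leaf. However, the one step you yourself flag as delicate is justified incorrectly, and as stated it fails. You claim the merged weight $w_i+w_j$ ``is a smallest weight among the $n-1$'' and therefore ``lies at the deepest level $\ell-1$'' of $H'$. Neither holds in general: for weights $2,3,4$ (or $2,3,4,9$) the merged weight $2+3=5$ exceeds the remaining weight $4$, so it is not a smallest weight of the reduced instance; and for weights $1,1,1,1$ one has $q_\ell=q_2=4$, so after contraction $q'_2=2>0$ and the deepest level of $H'$ is still $\ell$, not $\ell-1$ (the merged leaf sits one level above the bottom). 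So the parenthetical argument pins down neither the depth of the merged leaf nor even which level is deepest, and if the merged leaf ended up at depth $\ell$ the re-expansion would produce the wrong q-source.

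The correct way to close this gap --- and what the paper does --- is to compare $T'$ and $H'$ through Lemma~\ref{lem:heap-like} using the fact that they share the q-source $q'$: in any tree of an optimal code with q-source $q'$, the smallest $q'_\ell$ weights (as a multiset) occupy the depth-$\ell$ leaves, the next smallest $q'_{\ell-1}$ occupy depth $\ell-1$, and so on. In $T'$ the merged leaf has depth $\ell-1$ by construction, so the weight $w_i+w_j$ belongs to the depth-$(\ell-1)$ group of this partition; since $H'$ has the same q-source, $H'$ also has a leaf of weight $w_i+w_j$ at depth $\ell-1$ (if that weight value is tied with other input weights, pick any such leaf; Lemma~\ref{lem:valid-huffman} lets you swap equal-weight leaves so that the expanded one plays the role of the node created by Huffman's first merge). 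With that replacement your argument goes through and coincides with the paper's proof; your vague alternative ``directly by comparing with $q'$'' gestures at this, but the proof as written relies on the false minimality/deepest-level claim.
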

\begin{proof}
The proof is by induction on $n$. The case $n \le 2$ is trivial, so assume $n > 2$. Let $w_1 \ge w_2 \ge \cdots \ge w_n$. Consider an optimal prefix-free code for $w_1, w_2, \ldots, w_n$ with the q-source $q_1, q_2, \ldots, q_n$ and lengths $\ell_1, \ell_2, \ldots, \ell_n$. Let $\ell = \max\{\ell_1, \ell_2, \ldots, \ell_n\}$. The tree for the code is full and, due to Lemma~\ref{lem:heap-like}, has depth-$\ell$ leaves with weights $w_{n-1}$ and $w_n$. By swapping depth-$\ell$ leaves, we can make $w_{n-1}$ and $w_n$ siblings. Shrinking these siblings into one leaf of weight $w' = w_{n-1} + w_n$, we obtain the tree for a code with the q-source $q_1, q_2, \ldots, q_{\ell-2}, q_{\ell-1} + 1, q_{\ell} - 2$. The tree represents an optimal prefix-free code for the weights $w_1, w_2, \ldots, w_{n-2}, w'$: its total cost is $\sum_{i=1}^{n} \ell_i w_i - w'$ and any prefix-free code with a smaller cost would induce a code of cost smaller than $\sum_{i=1}^n \ell_i w_i$ for $w_1, w_2, \ldots, w_n$ (which is impossible) by expanding a leaf of weight $w'$ in its tree into two leaves of weights $w_{n-1}$ and $w_n$. Since the code is optimal, by Lemma~\ref{lem:heap-like}, the smallest $q_\ell - 2$ weights in the set $\{w_1, w_2, \ldots, w_{n-2}, w'\}$ mark all depth-$\ell$ leaves in the tree and the next smallest $q_{\ell-1} + 1$ weights mark all depth-$(\ell{-}1)$ leaves; the weight $w'$ is in depth $\ell - 1$ and, so, is in the second group. By the inductive hypothesis, there is a Huffman tree for the weights $w_1, w_2, \ldots, w_{n-2}, w'$ with the q-source $q_1, q_2, \ldots, q_{\ell-2}, q_{\ell-1} + 1, q_{\ell} - 2$. Again by Lemma~\ref{lem:heap-like}, the smallest  $q_\ell - 2$ weights in the set $\{w_1, w_2, \ldots, w_{n-2}, w'\}$ mark all depth-$\ell$ leaves in the Huffman tree and the next smallest $q_{\ell-1} + 1$ ($w'$ among them) mark depth $\ell - 1$. The leaf of weight $w'$ can be expanded into two leaves of weights $w_{n-1}$ and $w_n$, thus producing a Huffman tree for the weights $w_1, w_2, \ldots, w_n$ (since Huffman's algorithm by its first step unites $w_{n-1}$ and $w_n$ into $w'$) with the q-source $q_1, q_2, \ldots, q_n$.\qed
\end{proof}

By Lemma~\ref{lem:huffman-only}, it suffices to consider only q-sources of Huffman codes. Instead of processing them all, we develop a different approach. Consider a Huffman tree and all its nodes with a given depth $\ell$. Denote by $v_1, v_2, \ldots, v_i$ all distinct weights of these nodes such that $v_1 > v_2 > \cdots > v_i$. For $1 \le j \le i$, let $h_j$ be the number of depth-$\ell$ nodes of weight $v_j$, and $k_j$ be the number of depth-$\ell$ leaves of weight $v_j$ (so that $k_j \le h_j$). Lemma~\ref{lem:heap-like} implies that all nodes of weight $v$ such that $v_1 > v > v_i$ (if any) have depth $\ell$, i.e., are entirely inside the depth-$\ell$ ``layer''. Based on this observation, one can try to tackle the problem using the dynamic programming that, given parameters $\ell, v_1, h_1, k_1, v_i, h_i, k_i$, computes the minimal sum $\sum_{\ell' \ge \ell} \pop(q_{\ell'})$ among all q-sources induced by Huffman trees in which the depth-$\ell$ ``layer'' is compatible with $v_1, h_1, k_1, v_i, h_i, k_i$. The main challenge in this approach is to figure out somehow all possible configurations of the depth-$(\ell{+}1)$ layer compatible with the depth-$\ell$ parameters. In what follows, we prove a number of structural lemmas that resolve such issues and simplify the method; in particular, it turns out, the parameters $\ell, v_i, h_i, k_i$ can be omitted.

\begin{lemma}
The depths of equal-weight nodes in Huffman tree differ by at most~$1$.\label{lem:class-per-layers}%
\end{lemma}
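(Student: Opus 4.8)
The plan is to derive a contradiction from the assumption that two equal-weight nodes have depths differing by at least two, using only the monotonicity property (Lemma~\ref{lem:heap-like}) together with the elementary observation that, since all weights are positive, every internal node is strictly heavier than each of its children.

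Concretely, I would argue as follows. Let $u$ and $u'$ be distinct nodes of a Huffman tree with the same weight $w$, and assume without loss of generality that the depth of $u$ is at most the depth of $u'$. First I would note that neither $u$ nor $u'$ can be the root: the root has weight $w_1 + w_2 + \cdots + w_n$, which strictly exceeds the weight of any proper subtree (the omitted leaves have positive total weight), so the root is the unique node of maximal weight. Hence $u'$ has a parent $p'$, and since a Huffman tree is full, $u'$ has a sibling; therefore the weight of $p'$ equals $w$ plus the positive weight of that sibling, and so it is strictly greater than $w$. Now a Huffman tree is the tree of an optimal prefix-free code, so Lemma~\ref{lem:heap-like} applies to the pair consisting of $p'$ (weight $> w$) and $u$ (weight $w$): it yields that the depth of $p'$ is at most the depth of $u$. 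Since the depth of $p'$ is one less than the depth of $u'$, we obtain that the depth of $u'$ is at most the depth of $u$ plus one, which is the claim.

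I do not expect a real obstacle here; the only point requiring a moment's care is to justify that $u'$ actually has a parent, i.e.\ that $u'$ is not the root, which is exactly where positivity of the weights enters. It is also worth remarking that the argument uses nothing specific to Huffman trees beyond fullness and optimality, so in fact the statement holds verbatim for the tree of any optimal prefix-free code.
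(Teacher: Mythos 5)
Your proof is correct and follows essentially the same route as the paper: use positivity of the weights and fullness to conclude that the parent of a node of weight $w$ has weight strictly greater than $w$, then apply Lemma~\ref{lem:heap-like} to bound depths. The extra care about the root and the remark that the argument works for any optimal code tree are fine but not needed beyond what the paper already does.
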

\begin{proof}
Since all weights are positive and the tree is full, the parent of any node of weight $w$ has weight larger than $w$. Then, by Lemma~\ref{lem:heap-like}, the depths of all nodes of weight $w$ are at least the depth of that parent. Hence, the result follows.\qed
\end{proof}

\begin{figure}[!htb]
\begin{tikzpicture}[font=\small,level/.style={sibling distance=0.7cm, level distance=0.8cm}]
  \node  at (-0.73,0.8) {depth $\ell{-}1$:};
  \node  at (-0.52,0) {depth $\ell$:};
  \node  at (-0.73,-0.8) {depth $\ell{+}1$:};
  \node [circle,draw,fill=black,text=white]  at (0.7,0) {$w$}
    child { node [circle,draw,fill=white] {$v$} }
    child { node [circle,draw,fill=white] {$v$} };
  \node [circle,draw,fill=black,text=white]  at (1.4,0) {$w$};
  \node [circle,draw,fill=black,text=white]  at (2.1,0) {$w$}
    child { node [circle,draw,fill=white] {$v$} }
    child { node [circle,draw,fill=white] {$v$} };
  \node [circle,draw,fill=black,text=white]  at (2.8,0) {$w$};
  \node [circle,draw,fill=gray,text=white]  at (3.5,0) {$u$}
    child { node [circle,draw,fill=white] {$v$} }
    child { node [fill=white] {$...$} };
  \node [circle,draw,fill=gray,text=white]  at (4.2,0) {$u$};
  \node [circle,draw,fill=lightgray]  at (4.9,0) {$x$};
  \node [circle,draw,fill=lightgray]  at (5.6,0) {$x$};
  \node [fill=white]  at (6.65,0.8) {$...$}
    child { node [circle,draw,fill=lightgray] {$x$} }
    child { node [circle,draw,fill=white] {$v$} };
  \node [circle,draw,fill=black,text=white]  at (7.35,0.8) {$w$};
  \node [circle,draw,fill=black,text=white]  at (8.05,0.8) {$w$}
    child { node [circle,draw,fill=white] {$v$} }
    child { node [circle,draw,fill=white] {$v$} };
  \node [circle,draw,fill=black,text=white]  at (8.75,0.8) {$w$};
  \node [circle,draw,fill=black,text=white]  at (9.45,0.8) {$w$}
    child { node [circle,draw,fill=white] {$v$} }
    child { node [circle,draw,fill=white] {$v$} };
\end{tikzpicture}
\caption{A ``layer'': all nodes whose weights appear on depth $\ell$ ($w > u > x > v$).}\label{fig:layer}
\end{figure}
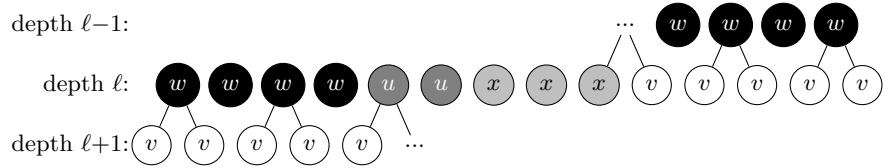

Due to Lemmas~\ref{lem:heap-like} and~\ref{lem:class-per-layers}, for each $\ell$, the depth-$\ell$ nodes with the largest and smallest weights (among all depth-$\ell$ nodes) can have equal-weighted nodes with depths $\ell - 1$ and $\ell + 1$, respectively, but all other depth-$\ell$ nodes cannot have equal-weighted nodes of different depths. See Fig.~\ref{fig:layer}. We are to show that this layout is quite stable: in all Huffman trees the nodes of a given weight are the same in number and have the same children, and the heaviest nodes in each layer, in a sense, determine the lightest nodes.

Since all node weights appear in the queue of Huffman's algorithm in non-decreasing order, the following claim is immediate.

\begin{lemma}
In a Huffman tree, for any $v \le v'$ and $w \le w'$, if a pair of siblings has weights $v$ and $v'$, and a different pair of siblings has weights $w$ and $w'$, then we have either $w' \le v$ or $v' \le w$.
\label{lem:sibling-order}
\end{lemma}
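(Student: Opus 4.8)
The plan is to read the lemma through the queue implementation of Huffman's algorithm (van Leeuwen~\cite{vanLeeuwen}), the point being a trivial monotonicity fact: since all weights are positive, whenever the algorithm merges two nodes the resulting node is strictly heavier than either child, so every node created after a given step is heavier than every node consumed up to that step. I would phrase the argument so that it does not matter which concrete tie-breaking produces the tree at hand; by Lemma~\ref{lem:valid-huffman} every Huffman tree is obtained by the algorithm with some tie-breaking, and that is all that is used.

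Concretely, each internal node of a Huffman tree is created at a unique step --- the step merging its two children --- so the two pairs of siblings in the statement are created at two distinct steps. Assume without loss of generality that $\{v,v'\}$ (with $v\le v'$) is created at step $s$ and $\{w,w'\}$ (with $w\le w'$) at a later step $t>s$; the conclusion ``$w'\le v$ or $v'\le w$'' is symmetric in the two pairs, so this loses nothing. At step $s$, by definition of the algorithm, $v$ and $v'$ are the two lightest nodes currently present, hence every other node present at step $s$ weighs at least $v'$. I would then show by induction on the step number that from step $s+1$ onward every node present weighs at least $v'$: immediately after step $s$ the nodes $v,v'$ are replaced by a single node of weight $v+v'>v'$, and each subsequent step replaces two nodes of weight $\ge v'$ by their sum, which is again $\ge v'$. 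Applying this at step $t$ gives $v'\le w$ (indeed $v'\le w\le w'$).

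The symmetric case, where $\{w,w'\}$ is created first, yields $w'\le v$, so in every case $w'\le v$ or $v'\le w$, as claimed. There is essentially no obstacle here: the only thing to watch is the bookkeeping of ``nodes currently present'' across steps together with the strict inequality $v+v'>v'$ that drives the induction, which is why the paper states the claim is immediate.
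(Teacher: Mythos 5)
Your proof is correct and follows essentially the same route as the paper, which dismisses the lemma as immediate from the monotonicity of the queue in Huffman's algorithm; you simply spell out that monotonicity (all nodes present after the step creating the earlier pair weigh at least $v'$, maintained by induction over merges). No gap: the case analysis by which pair is created first and the use of Lemma~\ref{lem:valid-huffman} to cover arbitrary tie-breaking are exactly the intended justification.
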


\begin{lemma}
One of the following (mutually exclusive) alternatives holds for the set of all nodes of a given weight $w$ in any Huffman tree for $w_1, w_2, \ldots, w_n$:
\begin{enumerate}
\item all the nodes are leaves;
\item all the nodes except one are leaves and this one node has two children with weights $w'$ and $w''$ such that $w' < w''$ and $w' + w'' = w$;
\item all the nodes except $m$ of them, for some $m > 0$, are leaves and each of these $m$ nodes has two children with weights $w/2$.
\end{enumerate}
In different Huffman trees the sets of all nodes of weight $w$ have the same size and the same alternative holds for them
with the same $w'$, $w''$, and $m$, respectively.\label{lem:class-struct}
\end{lemma}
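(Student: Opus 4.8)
The plan is to first establish the trichotomy within a single fixed Huffman tree $T$ for $w_1,\dots,w_n$, and then transfer it to all Huffman trees via Lemma~\ref{lem:valid-huffman}. Since all weights are positive and $T$ is full, every node of weight $w$ is either a leaf or an internal node whose two children have positive weights $v\le v'$ with $v+v'=w$ (so both $v,v'<w$). The crucial preliminary step is to show that all internal nodes of weight $w$ in $T$ have the \emph{same} ordered pair of children weights. Suppose one such node has children weights $v\le v'$ and a distinct one has children weights $u\le u'$. By Lemma~\ref{lem:sibling-order} applied to these two sibling pairs, either $u'\le v$ or $v'\le u$. In the first case $u\le u'\le v\le v'$, so $u\le v$ and $u'\le v'$; since $u+u'=v+v'=w$, this forces $u=v$ and $u'=v'$. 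The second case is symmetric. Hence there is a single pair $\{v,v'\}$, $v\le v'$, $v+v'=w$, shared by all internal nodes of weight $w$ in $T$.

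Now the trichotomy for $T$ follows by a short case distinction. If $T$ has no internal node of weight $w$, all nodes of weight $w$ are leaves: alternative~1. Suppose $T$ has at least one internal node of weight $w$, with common children pair $\{v,v'\}$. If $v<v'$, then $T$ cannot contain two such nodes, since Lemma~\ref{lem:sibling-order} applied to their (identical) children pairs would give $v'\le v$, a contradiction; so there is exactly one internal node of weight $w$, and we are in alternative~2 with $w'=v$ and $w''=v'$. If instead $v=v'$, then $v=v'=w/2$ and we are in alternative~3 with $m$ equal to the number of internal nodes of weight $w$ (which is $\ge 1$). The three alternatives are mutually exclusive, being distinguished by the number of internal nodes of weight $w$ together with whether their children are equal: zero; exactly one with unequal children; at least one with equal children.

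It remains to show that the number of nodes of weight $w$, the alternative that holds, and the data $w'$, $w''$, $m$ are the same for every Huffman tree. By Lemma~\ref{lem:valid-huffman}, any two Huffman trees for $w_1,\dots,w_n$ are joined by a sequence of operations, each of which swaps two siblings or swaps the subtrees rooted at two equal-weighted nodes. Swapping siblings changes neither the multiset of node weights nor the leaf/internal status of any node nor any internal node's children-weight pair. Two distinct equal-weighted nodes are necessarily incomparable (a proper ancestor of positive weight is strictly heavier), so swapping the subtrees rooted at two such nodes transplants disjoint subtrees of equal total weight and leaves every node weight in the tree unchanged; it therefore preserves the number of nodes of weight $w$, the number of internal ones among them, and --- since all of those internal nodes share one children pair, making the swap inconsequential even when the swapped nodes themselves have weight $w$ --- their common children pair. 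Hence each operation preserves which alternative holds and the values $w'$, $w''$, $m$, as required.

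I expect the only subtle point to be this last invariance argument, specifically checking that the subtree-swap operation behaves well in every case --- in particular when it exchanges an internal node of weight $w$ with a leaf of weight $w$ --- and it is precisely the uniformity of the children pair established in the first step that makes this go through cleanly.
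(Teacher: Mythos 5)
Your proof is correct and follows essentially the same route as the paper: the trichotomy is derived from Lemma~\ref{lem:sibling-order} applied to the children pairs of two internal weight-$w$ nodes (the paper does this in one step, showing two non-leaves force children of weight $w/2$, while you first establish a common children pair and then exclude two nodes with unequal children), and the invariance across all Huffman trees is obtained from the swapping operations of Lemma~\ref{lem:valid-huffman}, which the paper treats as straightforward and you merely spell out in more detail.
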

\begin{proof}
It suffices to prove that if two nodes of weight $w$ are not leaves, then their children have weights $w/2$. Suppose, to the contrary, that the children have weights, respectively, $w'$, $w''$, and $v'$, $v''$ such that $w' < w''$ and $v' \le v''$. But then $v' < w''$ and $w' < v''$, which is impossible by Lemma~\ref{lem:sibling-order}.

As the swapping operations of Lemma~\ref{lem:valid-huffman} do not change the number of nodes of weight $w$ and the weights of their children, the last part of the lemma (the preservation of the alternative with given $w'$, $w''$ or $m$) is straightforward.\qed
\end{proof}

The heaviest layer nodes determine the lightest ones as follows.

\begin{lemma}
In a Huffman tree, for $\ell \ge 0$, choose (if any) the largest $w$ among all weights of depth-$\ell$ nodes such that there is a non-leaf node of weight $w$ (maybe not of depth $\ell$). Let $t$ be a node of weight $w$ with children of weights $w'$ and $w''$ such that $w' \le w''$. Then, the weights of depth-$\ell$ nodes are at least $w'$, the weights of depth-$(\ell{+}1)$ nodes are at most $w''$, and the next alternatives are possible:
\begin{enumerate}
\item $w' < w''$, the depth of $t$ is $\ell$, and only one node of depth $\ell+1$ has weight $w''$;\label{en:t-on-level}
\item $w' < w''$, the depth of $t$ is $\ell - 1$, and only one node of depth $\ell$ has weight $w'$;\label{en:t-over-level}
\item $w' = w'' = w/2$, for some $m$, exactly $m$ non-leaf nodes of depth $\ell$ have weight $w$, and exactly $2m + \delta$ nodes of depth $\ell+1$ have weight $w/2$, where $\delta = 1$ if some node of weight $w/2$ has sibling of smaller weight, and $\delta = 0$ otherwise.\label{en:equal-weights}
\end{enumerate}\label{lem:layer-struct}
\end{lemma}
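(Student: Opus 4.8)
The plan is to fix a Huffman tree for $w_1,\dots,w_n$, assume the weight $w$ of the statement exists (otherwise the claim is vacuous), fix a node $t$ of weight $w$ with children of weights $w'\le w''$, and build everything on two elementary observations. First, since some depth-$\ell$ node has weight $w$, Lemma~\ref{lem:heap-like} gives (\textasteriskcentered): \emph{every node of weight strictly less than $w$ has depth at least $\ell$}. Second, by the maximality of $w$, (\textasteriskcentered\textasteriskcentered): \emph{every non-leaf node of depth $\ell$ has weight at most $w$}; in particular so does the parent of any depth-$(\ell{+}1)$ node. From (\textasteriskcentered\textasteriskcentered) it is immediate that no non-leaf node of weight $w$ lies at depth $\ell+1$ (its parent would be a heavier non-leaf at depth $\ell$), so by Lemma~\ref{lem:class-per-layers} the depth of $t$ is $\ell-1$ or $\ell$. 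Applying Lemma~\ref{lem:class-struct} to the weight $w$ — which is not of type~1 since a non-leaf of weight $w$ exists — splits the proof into the case $w'<w''$ (then $t$ is the unique non-leaf of weight $w$, and alternatives~\ref{en:t-on-level}/\ref{en:t-over-level} arise from $\mathrm{depth}(t)$) and the case $w'=w''=w/2$ (alternative~\ref{en:equal-weights}).

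For the weight bounds I would argue directly. To show every depth-$(\ell{+}1)$ node $u$ satisfies $w_u\le w''$, let $u'$ be its sibling and $p$ its parent; by (\textasteriskcentered\textasteriskcentered), $w_p\le w$. If $w_p=w$, then by Lemma~\ref{lem:class-struct} the children of $p$ weigh $w',w''$ or $w/2,w/2$, so $w_u\le w''$; if $w_p<w$ and, for contradiction, $w_u>w''$, then Lemma~\ref{lem:sibling-order} applied to the pairs $\{w_u,w_{u'}\}$ and $\{w',w''\}$ forces $w_{u'}\ge w_u>w''$, whence $w_p\ge 2w''\ge w'+w''=w$, impossible. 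To show every depth-$\ell$ node $s$ satisfies $w_s\ge w'$, suppose $w_s<w'$. If some non-leaf of weight $w$ has depth $\ell$, it has a child of weight $w'$ at depth $\ell+1$; but $w'>w_s$, so by Lemma~\ref{lem:heap-like} that child has depth at most $\mathrm{depth}(s)=\ell$, a contradiction. Otherwise $\mathrm{depth}(t)=\ell-1$; with $s'$ the sibling of $s$, Lemma~\ref{lem:sibling-order} applied to $\{w_s,w_{s'}\}$ and $\{w',w''\}$ rules out $w''\le w_s$ and hence forces $w_{s'}\le w'$, so the parent of $s$ has weight $w_s+w_{s'}<2w'\le w$ and lies at depth $\ell-1$, contradicting~(\textasteriskcentered).

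The structural alternatives come from the same two tools. If $w'<w''$ and $\mathrm{depth}(t)=\ell$: a depth-$(\ell{+}1)$ node of weight $w''$ has a parent of weight $\le w$ by (\textasteriskcentered\textasteriskcentered), and Lemma~\ref{lem:sibling-order} against $\{w',w''\}$ excludes parent weight $<w$, so the parent is $t$, which has a single child of weight $w''$ — alternative~\ref{en:t-on-level}. If $w'<w''$ and $\mathrm{depth}(t)=\ell-1$: a depth-$\ell$ node of weight $w'$ whose parent is not $t$ would, by Lemma~\ref{lem:sibling-order} together with the already-proven bound $\ge w'$, have its sibling of weight $w'$ too, giving a parent of weight $2w'<w$ at depth $\ell-1$, impossible by (\textasteriskcentered); so its parent is $t$ and it is $t$'s unique child of weight $w'$ — alternative~\ref{en:t-over-level}. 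Finally, if $w'=w''=w/2$, let $m$ be the number of depth-$\ell$ non-leaf nodes of weight $w$; each contributes its two children of weight $w/2$ at depth $\ell+1$, and every other depth-$(\ell{+}1)$ node of weight $w/2$ has a parent of weight $<w$ and hence a strictly lighter sibling — Lemma~\ref{lem:sibling-order} forbids two such nodes, and (by the bound $\ge w'$, and (\textasteriskcentered)) such a node cannot sit at depth $\le\ell$ — so their number is $2m+\delta$ with $\delta\in\{0,1\}$ exactly as in alternative~\ref{en:equal-weights}.

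I expect the real difficulty to be the bound $\ge w'$ in the case $\mathrm{depth}(t)=\ell-1$: unlike every other estimate it does not reduce to a single application of monotonicity and genuinely needs the interplay of Lemma~\ref{lem:sibling-order} (to push the whole sibling pair of an offending node below $w'$) with fact~(\textasteriskcentered) (that no depth-$(\ell{-}1)$ node is lighter than $w$). Care is also needed in the equal-children case not to overclaim: weight-$w$ non-leaves may legitimately occur at depth $\ell-1$ as well, so $m$ must be read as the count at depth $\ell$ only, and only the children that actually live at depth $\ell+1$ are counted by $2m+\delta$.
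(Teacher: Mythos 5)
Your route is essentially the paper's: both weight bounds and all three alternatives come from the dichotomy of Lemma~\ref{lem:sibling-order} combined with the monotonicity of Lemma~\ref{lem:heap-like}, and your facts $(\ast)$ and $(\ast\ast)$ are just explicit forms of what the paper uses implicitly. (Minor remark: your case split for the bound $\ge w'$ is unnecessary — your second case's argument, sibling $\le w'$ hence parent of weight $<2w'\le w$ at depth $\ell-1$ contradicting $(\ast)$, works regardless of the depth of $t$, which is how the paper argues it — but this is harmless.)

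There is, however, one genuinely missing half-step, in alternative~\ref{en:equal-weights}. The lemma's $\delta$ is a \emph{global} indicator: $\delta=1$ iff somewhere in the tree a weight-$w/2$ node has a lighter sibling. Your count shows that the depth-$(\ell{+}1)$ weight-$w/2$ nodes are the $2m$ children of the $m$ counted nodes plus at most one extra node (one with a lighter sibling), and that such an extra node cannot lie at depth $\le\ell$. That gives a count of $2m+\epsilon$ with $\epsilon\le\delta$, but not $\epsilon=\delta$: if the unique witness existed but sat at depth $\ge\ell+2$, you would have $\delta=1$ while the depth-$(\ell{+}1)$ count is only $2m$. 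So you must also exclude depths $\ge\ell+2$, which you never do (your closing remark even stresses that only nodes actually at depth $\ell+1$ are counted, but the verification for the witness is only one-sided). The fix is short: every node of depth greater than $\ell+1$ is a proper descendant of a depth-$(\ell{+}1)$ node, whose weight is at most $w''=w/2$, so such a node weighs strictly less than $w/2$; hence no weight-$w/2$ node lies below depth $\ell+1$. Equivalently, the paper pins down the parent of the witness: its weight lies strictly between $w/2$ and $w$, so by Lemma~\ref{lem:heap-like} its depth is at least $\ell$, while every node of depth greater than $\ell$ weighs at most $w''=w/2$, so its depth is exactly $\ell$ and the witness sits at depth $\ell+1$. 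With this line added, your proof is complete and matches the paper's.
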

\begin{proof}
Due to Lemma~\ref{lem:sibling-order}, any node of depth $\ell + 1$ with weight larger than $w''$ has a sibling of weight at least $w''$. Hence, their parent, whose depth is $\ell$, has weight larger than $2w'' \ge w$, which contradicts the choice of $w$. Analogously, any node of depth $\ell$ with weight less than $w'$ has a sibling of weight at most $w'$ and, hence, their parent, whose depth is $\ell-1$, has weight smaller than $2w' \le w$, which is impossible by Lemma~\ref{lem:heap-like} since there is a node of depth $\ell$ with weight $w$. Thus, the first part of the lemma is proved. Let us consider $t$ and its alternatives.

Let $w' < w''$. By Lemma~\ref{lem:class-per-layers}, the depth of $t$ differs from $\ell$ by at most one. Since the weight of $t$ is larger than $w''$, $t$ cannot have depth $\ell + 1$. Suppose the depth of $t$ is $\ell$. By Lemma~\ref{lem:sibling-order}, any node $v$ of depth $\ell + 1$ and weight $w''$ whose parent is not $t$ has a sibling of weight at least $w''$. Hence, its parent, whose depth is $\ell$, is of weight at least $2w'' > w$, which contradicts the choice of $w$. Thus, there is no such $v$. Suppose the depth of $t$ is $\ell - 1$. By Lemma~\ref{lem:sibling-order}, any node $v$ of depth $\ell$ and weight $w'$ whose parent is not $t$ has a sibling of weight at most $w'$ and their parent, whose depth is $\ell - 1$, has weight at most $2w' < w$, which contradicts Lemma~\ref{lem:heap-like} since we have a depth-$\ell$ node of weight $w$. Thus, there is no such $v$.

Let $w' = w'' = w/2$. Lemma~\ref{lem:sibling-order} implies that at most one node of weight $w/2$ can have a sibling of smaller weight. Suppose such node $v$ exists. The weight of its parent is less than $w$ but larger than $w/2$. Since all nodes of depth $\ell$ have weights at most $w/2$, the depth of the parent is at most $\ell$ and, by Lemma~\ref{lem:heap-like}, at least $\ell$. Hence, $v$ has depth $\ell + 1$ and, therefore, the total number of nodes of depth $\ell + 1$ with weights $w/2$ is $2m + 1$. The remaining details are obvious.\qed
\end{proof}

The layout described in Lemmas~\ref{lem:class-struct} and~\ref{lem:layer-struct} can be seen as a more detailed view on the well-known sibling property~\cite{Gallager}: listing all node weights layer by layer upwards, one can obtain a non-decreasing list in which all siblings are adjacent.

\section{Algorithm}
\label{sec:algorithm}

Our scheme is as follows. Knowing the heaviest nodes in a given depth-$\ell$ layer of a Huffman tree and which of them are leaves, one can deduce from Lemma~\ref{lem:layer-struct} the lightest layer nodes. Then, by Lemma~\ref{lem:heap-like}, the heaviest nodes of the depth-$(\ell{+}1)$ layer either immediately precede the lightest nodes of depth $\ell$ in the weight-sorted order or have the same weight; in any case they can be determined and the depth-$(\ell{+}1)$ layer can be recursively processed. Let us elaborate details.

In the beginning, the algorithm constructs a Huffman tree for $w_1, w_2, \ldots, w_n$ and splits all nodes into classes $c_1, c_2, \ldots, c_r$ according to their weights: all nodes of class $c_u$ are of the same weight denoted $\hat{w}_u$ and $\hat{w}_1 > \hat{w}_2 > \cdots > \hat{w}_r$. In addition to $\hat{w}_u$, the algorithm calculates for each class $c_u$ the following attributes:
\begin{itemize}
\item the total number of nodes in $c_u$, denoted $|c_u|$;
\item the number of leaves in $c_u$, denoted $\lambda_u$;
\item provided $\lambda_u \ne |c_u|$ (i.e., not all class nodes are leaves), the weights of two children of an internal node with weight $\hat{w}_u$ (by Lemma~\ref{lem:class-struct}, the choice of the node is not important), denoted $w'_u$ and $w''_u$ and such that $w'_u \le w''_u$;
\item the number $\delta_u$ such that $\delta_u = 1$ if $\lambda_u \ne |c_u|$ and a node of weight $w'_u$ has a sibling of weight ${<}w'_u$, and $\delta_u = 0$ otherwise ($\delta_u$ serves as $\delta$ from Lemma~\ref{lem:layer-struct});
\item the number $t_u$ such that $t_u = 0$ if $\lambda_t = |c_t|$ for all $t \ge u$, and $t_u = \min\{t \ge u \colon \lambda_t \ne |c_t|\}$ otherwise ($t_u$ is used to identify a node $t$ as in Lemma~\ref{lem:layer-struct}).
\end{itemize}
It is straightforward that the attributes, for all classes, can be computed in $\Oh(n \log n)$ overall time. By Lemma~\ref{lem:class-struct}, the class information (i.e., class weights and attributes) is the same in all possible Huffman trees for $w_1, w_2, \ldots, w_n$.

We use a dynamic programming with parameters $c_u$, $h$, $k$: $c_u$ is the class to which the heaviest nodes in a depth-$\ell$ layer of a Huffman tree belong, $h \le |c_u|$ is the number of nodes from $c_u$ with depth $\ell$, and $k \le h$ is the number of leaves from $c_u$ with depth $\ell$ (note that $\ell$ is not known). Informally, the algorithm recursively computes from $c_u, h, k$ the minimal sum $\sum_{\ell' \ge \ell} \pop(q_{\ell'})$, where $q_{\ell'}$ is the number of leaves with depth $\ell'$, among all Huffman trees that, for some $\ell$, have a depth-$\ell$ layer compatible with the parameters $c_u, h, k$. Using the class information, one can deduce from $c_u$, $h$, $k$ the weight $\hat{w}_{u'}$ of the heaviest nodes of the lower depth-$(\ell{+}1)$ layer, and the number $h'$ of the nodes of weight $\hat{w}_{u'}$ with depth $\ell + 1$: if $t_u = 0$, our layer is lowest and we simply return $\pop(k + \sum_{i=u+1}^r \lambda_i)$; otherwise, we denote $t = t_u$ and consider several cases according to Lemma~\ref{lem:layer-struct}:
\begin{enumerate}
\item $w'_{t} < w''_{t}$ and either $t \ne u$ or $k \ne h$; then $\hat{w}_{u'} = w''_{t}$ and $h' = 1$;
\item $w'_{t} < w''_{t}$, $t = u$, and $k = h$ (i.e., as Lemmas~\ref{lem:class-struct} and~\ref{lem:layer-struct} imply, the only non-leaf node of weight $\hat{w}_t$ has depth $\ell - 1$); then either $\hat{w}_{u'} = w'_{t}$ and $h' = |c_{u'}| - 1$ if there are at least two nodes of weight $w'_{t}$, or, otherwise, $\hat{w}_{u'}$ is the weight immediately preceding $w'_{t}$ in the sorted set of all weights and $h' = |c_{u'}|$;
\item[3.1] $w'_{t} = w''_{t}$, $t = u$, and $2(h - k) + \delta_u \ne 0$; then $\hat{w}_{u'} = \hat{w}_{u}/2$ and $h' = 2(h - k) + \delta_{u}$;
\item[3.2] $w'_{t} = w''_{t}$, $t = u$, and $2(h - k) + \delta_u = 0$; then $\hat{w}_{u'}$ is the weight immediately preceding $\hat{w}_{u}/2$ in the sorted set of all weights and $h' = |c_{u'}|$;
\item[3.3] $w'_{t} = w''_{t}$ and $t \ne u$; then $\hat{w}_{u'} = \hat{w}_t / 2$ ($=w'_{t}$) and $h' = 2(|c_{t}| - \lambda_{t}) + \delta_{t}$.
\end{enumerate}

Thus, the parameters $c_u$, $h$, $k$ uniquely determine the weight $\hat{w}_{u'}$ (and, hence, the class $c_{u'}$) of the heaviest nodes of the subsequent depth-$(\ell{+}1)$ layer and the number $h'$ of nodes of weight $\hat{w}_{u'}$ with depth $\ell + 1$ (recall that $\ell$ denotes the depth of the ``$(c_u, h, k)$-layer'' and is unknown). The number of leaves of weight $\hat{w}_{u'}$ with depth $\ell + 1$, denoted $k'$, is a ``free'' parameter of the recursion. We loop trough all possible $k'$ and, thus, compute the minimum of the sums $\sum_{\ell' \ge \ell} q_{\ell'}$.

The parameter $k'$ is not arbitrary: it cannot exceed neither $h'$ nor $\lambda_{u'}$, and cannot make $\lambda_{u'} - k'$ (the number of leaves from $c_{u'}$ with depth $\ell$) larger than $|c_{u'}| - h'$ (the total number of nodes from $c_{u'}$ with depth $\ell$). Processing all $k'$ subject to these restrictions, we compute the answer, $f(c_u, h, k)$, as follows:
\begin{equation}\label{eq:main-recursion}
f(c_u, h, k){\,=}\min_{\substack{0\le k'\le\min\{h', \lambda_{u'}\}\\\mbox{\tiny and}\\ \lambda_{u'} - |c_{u'}| + h' \le k'}}\left\{f(c_{u'}, h', k') + \mathop{\pop}\!\!\left(k{+}\lambda_{u'}{-}k'{+}\!\!\!\sum_{i=u+1}^{u'-1} \lambda_i\right)\right\}.
\end{equation}
Thus, if (\ref{eq:main-recursion}) is indeed a solution, $f(c_1, 1, 0)$ gives the answer for the problem: the minimal $\sum_{\ell=1}^n \pop(q_\ell)$ for all q-sources $q_1, q_2, \ldots, q_n$ of optimal prefix-free codes; the restriction to Huffman trees in the design of the recursion is justified by Lemma~\ref{lem:huffman-only}. Using memoization instead of the mere recursion in~(\ref{eq:main-recursion}), one obtains a polynomial time algorithm. The optimal skeleton tree itself is constructed by applying the algorithm of Klein et al.~\cite{KleinEtAl} to a q-source $q_1, q_2, \ldots, q_n$ achieving the minimum, which can be found by the standard backtacking technique.

It is not immediate, however, that every choice of the parameter $k'$ in~(\ref{eq:main-recursion}) yields a ``layered structure'' (designated by recursive calls) corresponding to a valid Huffman tree for $w_1, w_2, \ldots, w_n$; in principle, one could imagine a situation when the assignments of $c_{u'}, h', k'$ leading to the minimum in~(\ref{eq:main-recursion}) do not correspond to any Huffman tree at all. Fortunately, this is not the case. Let us prove that whenever a triple $(c_u, h, k)$ is reached on an $(\ell{+}1)$st level of the recursion~(\ref{eq:main-recursion}) starting from $f(c_1, 1, 0)$ (so that the triple $(c_1, 1, 0)$ is on the first level), it is guaranteed that there exists a Huffman tree for $w_1, w_2, \ldots, w_n$ in which the heaviest nodes of the depth-$\ell$ layer are from the class $c_u$, exactly $h$ nodes of weight $\hat{w}_u$ have depth $\ell$, exactly $k$ nodes of these $h$ nodes are leaves, and every (higher) depth-$\ell'$ layer, for $\ell' < \ell$, in the tree is analogously compatible with a corresponding triple $(c_{u'}, h', k')$ on level $\ell'$ leading to $(c_u, h, k)$ in the recursion.

Obviously, the only node of depth zero (the root) is from the class $c_1$ with $|c_1| = 1$ in all possible Huffman trees. Suppose that a triple $(c_u, h, k)$ is reached in~(\ref{eq:main-recursion}) on level $\ell + 1$ starting from $f(c_1, 1, 0)$ and there is a Huffman tree $\mathcal{T}$ in which the heaviest nodes with depth $\ell$ are from $c_u$, exactly $h$ of the class $c_u$ nodes have depth $\ell$, $k$ of these $h$ nodes are leaves, and all (higher) depth-$\ell'$ layers, for $\ell' < \ell$, are compatible with their corresponding triples from the call stack of the recursion. The argument above shows that the triple $(c_u, h, k)$ uniquely determines analogous parameters $c_{u'}$ and $h'$ (but not $k'$) for the layer of depth $\ell + 1$ in $\mathcal{T}$. Fix an arbitrary integer $k'$ satisfying the conditions imposed by the minimum in~(\ref{eq:main-recursion}). It is straightforward that by swapping subtrees rooted at weight-$\hat{w}_{u'}$ nodes of depths $\ell$ and $\ell + 1$, one can transform $\mathcal{T}$ into a tree having exactly $k'$ leaves from $c_{u'}$ of depth $\ell + 1$. By Lemma~\ref{lem:valid-huffman}, the obtained tree is a Huffman tree too and the transformations do not affect the nodes of class $c_u$ and nodes with depths less than $\ell$. Therefore, the claim is proved and the correctness of the algorithm immediately follows from it.

Let us estimate the running time. The sum $\sum_{i=u+1}^{u'-1} \lambda_i$ in (\ref{eq:main-recursion}) can be calculated in $\Oh(1)$ time if one has precomputed sums $\sum_{i=1}^{v} \lambda_i$, for all $v \le r$. All other parameters in (\ref{eq:main-recursion}), except $k'$, are precomputed and accessible in $\Oh(1)$ time. Since $h \le |c_u|$ and $\sum_{i=1}^r |c_i| = 2n-1$ (the number of nodes in any full tree with $n$ leaves is $2n - 1$), the number of different pairs $(c_u, h)$ is $\Oh(n)$. Hence, the number of triples $(c_u, h, k)$ is $\Oh(n^2)$. For each of the triples, the algorithm runs through at most $\Oh(n)$ appropriate $k'$s in~(\ref{eq:main-recursion}). Therefore, the overall running time is $\Oh(n^3)$ and we have proved the following theorem.

\begin{theorem}
An optimal skeleton tree for positive weights $w_1, w_2, \ldots, w_n$ can be constructed in $\Oh(n^3)$ time.\label{thm:cubic-algorithm}
\end{theorem}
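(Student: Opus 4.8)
The plan is to verify that the procedure assembled in the preceding discussion — build a Huffman tree for $w_1,\dots,w_n$, partition its nodes into weight classes $c_1,\dots,c_r$ with the listed attributes, run the memoized recursion~(\ref{eq:main-recursion}), then feed a minimizing q-source to the construction of Lemma~\ref{lem:klein-algorithm} — is correct and runs in $\Oh(n^3)$ time. Much of this has in effect already been carried out above, so the proof is largely a matter of assembly, but the correctness argument has two halves worth separating. First, $f(c_1,1,0)$ is a \emph{lower} bound for $\min_q\sum_\ell\pop(q_\ell)$ over q-sources $q$ of optimal codes: by Lemma~\ref{lem:huffman-only} it suffices to range over q-sources of Huffman trees, and for any such tree the layer-by-layer decomposition together with Lemmas~\ref{lem:heap-like}, \ref{lem:class-per-layers}, \ref{lem:class-struct}, and~\ref{lem:layer-struct} shows that the triple (heaviest depth-$\ell$ class $c_u$, number $h$ of its nodes on depth $\ell$, number $k$ of leaves among them) evolves from layer to layer exactly according to one of the cases~1--3.3, so the tree's q-source occurs among the terms minimized in~(\ref{eq:main-recursion}); hence $f(c_1,1,0)$ does not exceed the optimum.

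Second — and this is the delicate point the author flags — one must show $f(c_1,1,0)$ is \emph{attained} by a genuine Huffman tree, i.e.\ no spurious configuration enters the minimum. I would prove by induction on the recursion depth, exactly as sketched before the theorem statement, that every triple $(c_u,h,k)$ reached on level $\ell+1$ starting from $f(c_1,1,0)$ is realized by some Huffman tree $\mathcal{T}$ whose layers of depth $<\ell$ already match the triples above it on the call stack. The base case is the root, the sole class-$c_1$ node in every Huffman tree. For the step: the class attributes force the weight $\hat w_{u'}$ and count $h'$ of the heaviest depth-$(\ell+1)$ nodes, and any admissible $k'$ can be obtained from $\mathcal{T}$ by swapping subtrees rooted at weight-$\hat w_{u'}$ nodes of depths $\ell$ and $\ell+1$; by Lemma~\ref{lem:valid-huffman} the result is again a Huffman tree, and the swap leaves all class-$c_u$ nodes and all shallower nodes intact. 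So the minimizing chain of triples corresponds to an actual Huffman tree, its q-source is the q-source of an optimal code, and Lemma~\ref{lem:klein-algorithm} turns it into an optimal skeleton tree with the accompanying code, recovered by standard backtracking through~(\ref{eq:main-recursion}).

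For the running time I would collect the estimates already stated: the Huffman tree and all class attributes, including the prefix sums $\sum_{i\le v}\lambda_i$, are computable in $\Oh(n\log n)$ time; since $\sum_u|c_u|=2n-1$ for any full tree with $n$ leaves, there are $\Oh(n)$ admissible pairs $(c_u,h)$ and hence $\Oh(n^2)$ memoized states $(c_u,h,k)$; each state evaluates~(\ref{eq:main-recursion}) in $\Oh(n)$ time by looping over the $\Oh(n)$ admissible values of $k'$, every other quantity being precomputed and $\Oh(1)$-accessible; and the backtracking plus the Klein et al.\ construction add only $\Oh(n\log n)$. This yields $\Oh(n^3)$ overall.

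The main obstacle is the second correctness half: ruling out that the minimum of~(\ref{eq:main-recursion}) is witnessed by a chain of triples not jointly realizable by a single Huffman tree. The swapping argument via Lemma~\ref{lem:valid-huffman} settles it cleanly, but only because the state was designed to record \emph{just enough} — the heaviest class on the current depth and its split into leaves versus internal nodes — for Lemma~\ref{lem:layer-struct} to pin down the next layer's heaviest nodes deterministically; this is precisely why the structural lemmas of Section~\ref{sec:layered-struct} were proved in the form given. Everything else is bookkeeping.
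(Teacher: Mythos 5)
Your proposal is correct and follows essentially the same route as the paper: the case analysis via Lemmas~\ref{lem:heap-like}--\ref{lem:layer-struct} to determine $c_{u'},h'$ from $(c_u,h,k)$, the induction with subtree swaps (Lemma~\ref{lem:valid-huffman}) showing every chain of triples reached from $f(c_1,1,0)$ is realized by an actual Huffman tree, reduction to Huffman codes by Lemma~\ref{lem:huffman-only}, and the $\Oh(n^2)$ states times $\Oh(n)$ choices of $k'$ count. Your explicit separation of the correctness into a lower-bound half and an attainability half is a slightly cleaner packaging of what the paper argues, but it is the same argument.
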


The table used by the algorithm of Theorem~\ref{thm:cubic-algorithm} is of size $\Oh(n^2)$. The main slowdown making the running time cubic is in the processing of $\Oh(n)$ possible parameters $k'$ in~(\ref{eq:main-recursion}). We are to optimize this aspect of the algorithm.

For $i, j \in \mathbb{Z}$, denote $[i..j] = \{i, i+1, \ldots, j\}$. Given parameters $c_u, h, k$, the conditions under the minimum in~(\ref{eq:main-recursion}) determine a range $[k'_{min}..k'_{max}]$ of $k'$ that should be processed. As a first attempt for optimization, one might try to pass in the function $f$ not one $k'$ but the whole range $[k'_{min}..k'_{max}]$, as if the function looked like $f(c_{u'}, h', [k'_{min} .. k'_{max}])$. However, in this case it is not clear how to take into account the effects of the choice of $k' \in [k'_{min}.. k'_{max}]$ on higher (with smaller depth) layers, namely, on the following summand from~(\ref{eq:main-recursion}):
\begin{equation}
\mathop{\pop}\left(k + \lambda_{u'} - k' + \sum_{i=u+1}^{u'-1} \lambda_i\right).\label{eq:summand}
\end{equation}
It turns out that the range $[k'_{min}..k'_{max}]$ can be split into $\Oh(\log n)$ disjoint subranges so that, for each subrange, the effect of the choice of $k'$ on higher layers is well determined. The splitting is based on the following lemma.

\begin{lemma}
In $\Oh(\log b)$ time one can split any range $[a..b{-}1]$ into subranges $[i_0..i_1{-}1]$, $[i_1..i_2{-}1], \ldots, [i_{m-1}..i_m{-}1]$ such that $a = i_0 < i_1 < \cdots < i_m = b$, $m \le 2\log_2 b$, and for each $j \in [0..m{-}1]$, the length of the subrange $[i_j .. i_{j+1}{-}1]$ is a power of two and $\pop(i_{j} + x) = \pop(i_{j}) + \pop(x)$ whenever $0 \le x < i_{j+1} - i_{j}$.\label{lem:pop-subranges}
\end{lemma}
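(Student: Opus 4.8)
The plan is to reduce the statement to a purely combinatorial fact about dyadic intervals and then prove that fact by a greedy left-to-right construction. The first step is to characterize the admissible subranges: I claim that a range $[i..i+2^t-1]$ of length $2^t$ satisfies $\pop(i+x)=\pop(i)+\pop(x)$ for all $0\le x<2^t$ if and only if $2^t$ divides $i$, i.e.\ the $t$ least significant bits of $i$ are zero. For the ``if'' direction, when the low $t$ bits of $i$ vanish, adding any $x<2^t$ merely fills those bit positions with the bits of $x$ and triggers no carry, so the popcounts add. For ``only if'', if some bit of $i$ below position $t$ is set, let $p<t$ be the least such position and take $x=2^p$; then adding $2^p$ to $i$ causes a carry out of position $p$, whence $\pop(i+2^p)<\pop(i)+1=\pop(i)+\pop(x)$. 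Thus the task becomes: partition $[a..b{-}1]$ into consecutive \emph{aligned} power-of-two blocks (a block $[i..i+2^t-1]$ being aligned if $2^t\mid i$), using at most $2\log_2 b$ blocks and $\Oh(\log b)$ time.

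Next I would use the greedy ``align-and-grow'' rule. Set $i_0=a$ and, while $i_j<b$, let $2^{t_j}$ be the largest power of two that divides $i_j$ and satisfies $i_j+2^{t_j}\le b$ (with the convention that every power of two divides $0$), then put $i_{j+1}=i_j+2^{t_j}$. Each produced block is aligned by construction and of power-of-two length, the endpoints strictly increase (every block has length at least one), and the process reaches $i_m=b$; hence the output is a valid partition. It remains to bound $m$. The key structural observation is that the sequence of block lengths is \emph{unimodal}: there is an index $j^\ast$ with $t_0<t_1<\cdots<t_{j^\ast-1}$ and $t_{j^\ast}>t_{j^\ast+1}>\cdots>t_{m-1}$. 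Indeed, writing $\nu(\cdot)$ for the $2$-adic valuation, as long as $b$ imposes no restriction the chosen block at $i_j$ has size $2^{\nu(i_j)}$, so $i_{j+1}=i_j+2^{\nu(i_j)}$ has strictly larger valuation and the lengths strictly increase; and the first time $b$ forces a block strictly smaller than $2^{\nu(i_j)}$, one checks that $\nu(i_{j+1})=t_j$ and $i_{j+1}+2^{t_j}>b$, so the restriction is absorbing and the lengths strictly decrease from then on. Since both strict monotone runs consist of integers in $[0..\lfloor\log_2 b\rfloor]$, we obtain $m\le 2\log_2 b$ (the exact constant follows from a finer accounting of the peak value). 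Finally, the $m=\Oh(\log b)$ iterations each perform a constant number of arithmetic and bit operations on $\Oh(\log b)$-bit words, giving the $\Oh(\log b)$ running time.

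I expect the main obstacle to be the unimodality argument: one must verify carefully that once the upper endpoint $b$ forces a block shorter than the natural aligned one, this situation persists, with $\nu(i_{j+1})=t_j$ and the next natural block again overshooting $b$, so that the block lengths are strictly decreasing from that point and the total number of blocks is at most twice the number of bit positions. The bit-level reduction of the first step and the termination/validity of the greedy are routine.
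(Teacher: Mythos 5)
Your proposal is correct and follows essentially the same route as the paper: the same greedy rule (from each endpoint take the largest power-of-two block that is aligned, i.e.\ bounded by the number of trailing zeros, and still fits below $b$), the same observation that alignment gives $\pop(i_j+x)=\pop(i_j)+\pop(x)$, and the same unimodality-of-block-lengths argument for the $\Oh(\log b)$ bound. Your explicit verification that the restriction imposed by $b$ is absorbing (via $\nu(i_{j+1})=t_j$ and $i_{j+1}+2^{t_j}>b$) just fills in a step the paper states without detail.
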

\begin{proof}
We use the operation $\mathsf{ctz}(x)$ counting trailing zeros in the binary representation of $x$.\footnote{If not supported, it can be implemented using a precomputed table of size $\Oh(n)$ \cite{FredmanWillard}.} The subranges are generated iteratively: given $i_j$ (initially $i_0 = a$), we compute $d_j = \max\{d \le \mathsf{ctz}(i_j) \colon i_j + 2^d \le b\}$ decrementing $d = \mathsf{ctz}(i_j)$ until $i_j + 2^d \le b$, and put $i_{j+1} = i_j + 2^{d_j}$; we stop when $i_{j+1} = b$. Since $\mathsf{ctz}(x + 2^d) > d$ for $d = \mathsf{ctz}(x)$, the numbers $d_j$ first increase on each iteration ($d_0 < d_1 < \cdots$), then decrease ($\cdots > d_{m-2} > d_{m-1}$), but always $d_j \le \log_2 b$. Therefore, at most $2\log_2 b$ subranges are created. Finally, since each $i_j$, by construction, has at least $d_j$ trailing zeros, we have $\pop(i_j + x) = \pop(i_j) + \pop(x)$ for $0 \le x < 2^{d_j}$.\qed
\end{proof}

Suppose that we compute $f(c_u, h, k)$ using the recursion~(\ref{eq:main-recursion}). The range $[k'_{min} .. k'_{max}]$ for $k'$ determined by the conditions under the minimum in~(\ref{eq:main-recursion}) is bijectively mapped onto a contiguous range of arguments for $\pop$ in~(\ref{eq:summand}), namely, onto $[\Delta - k'_{max} .. \Delta - k'_{min}]$, where $\Delta = k + \lambda_{u'} + \sum_{i=u+1}^{u'-1} \lambda_i$. According to Lemma~\ref{lem:pop-subranges}, in $\Oh(\log n)$ time one can split the range $[k'_{min}..k'_{max}]$ into $\Oh(\log n)$ disjoint subranges $[i_0{+}1..i_1], [i_1{+}1..i_2], \ldots, [i_{m-1}{+}1..i_m]$, for some $i_1 < i_2 < \cdots < i_m$, such that, for any $j \in [1..m]$, the length of $[i_{j-1}{+}1..i_j]$ is a power of two, denoted by $2^{d_j}$, and one has $\pop(\Delta - k') = \pop(\Delta - i_j) + \pop(i_j - k')$ whenever $i_j - 2^{d_j} < k' \le i_j$. Therefore, the minimum in~(\ref{eq:main-recursion}) restricted to any such subrange $[i_{j-1}{+}1..i_j]$ can be transformed as follows:
\begin{multline}\label{eq:f-to-g}
\min\{f(c_{u'}, h', k') + \pop(\Delta - k') \colon i_{j-1} < k' \le i_j\} =\\
 \pop(\Delta - i_j) + \min\{f(c_{u'}, h', i_j - s) + \pop(s) \colon 0 \le s < 2^{d_j}\}.
\end{multline}
In order to compute the minimum of the values $f(c_{u'}, h', i_j - s) + \pop(s)$, we devise a new function $g(c_u, h, k_{max}, d)$ that is similar to $f$ but, instead of simply computing $f(c_u, h, k_{max})$, it finds the minimum of $f(c_u, h, k_{max} - s) + \pop(s)$ for all $s \in [0 .. 2^d{-}1]$. In other words, the function $g$ computes the minimum of $f(c_u, h, k)$ in the range $k_{max}-2^d < k \le k_{max}$ taking into account the effect on the higher layer (something that $f$ does not care). Using~(\ref{eq:f-to-g}), one can calculate (\ref{eq:main-recursion}) by finding the following minimum: $\min\{\pop(\Delta - i_j) + g(c_{u'}, h', i_j, d_j) \colon j \in [1..m]\}$, which contains $\Oh(\log n)$ invocations of $g$ since $m = \Oh(\log n)$ due to Lemma~\ref{lem:pop-subranges}. The implementation of the function $g$ itself is rather straightforward:
\begin{equation}\label{eq:main-recursion2}
\begin{array}{l}
g(c_u, h, k_{max}, 0) = f(c_u, h, k_{max}),\\
g(c_u, h, k_{max}, d) = \min\left\{\begin{array}{l}g(c_u, h, k_{max}, d-1),\\ g(c_u, h, k_{max} - 2^{d-1}, d-1) + 1.\end{array}\right.
\end{array}
\end{equation}
While computing $g(c_u, h, k_{max}, d)$ with $d > 0$, we choose whether to set the $d$th bit of $s$ in order to minimize $f(c_u, h, k_{max}\!- s) + \pop(s)$: if set, it affects $\pop(s)$ increasing it by one and we still can choose $d-1$ lower bits, hence, the answer is the same as for $g(c_u, h, k_{max}\!-2^{d-1}, d-1) + 1$; if not set, the answer is the same as for $g(c_u, h, k_{max}, d-1)$.  Thus, $g(c_u, h, k_{max}, d)$ implemented as in~(\ref{eq:main-recursion2}) indeed computes the required $\min\{f(c_u, h, k_{max}\!- s) + \pop(s) \colon 0 \le s < 2^d\}$.

The correctness of the algorithm should be clear by this point. Let us estimate the running time. The splitting process in the described implementation of $f(c_u, h, k)$ takes $\Oh(\log n)$ time, by Lemma~\ref{lem:pop-subranges}, which is a significant improvement over the previous $\Oh(n)$ time. Since the number of tripes $(c_u, h, k)$ is $\Oh(n^2)$ in total, there are at most $\Oh(n^2\log n)$ quadruples $(c_u, h, k_{max}, d)$ that can serve as parameters for $g$ (note that $0 \le d \le \log_2 n$). Every such quadruple is processed in constant time according to~(\ref{eq:main-recursion2}). Therefore, the overall running time of such algorithm with the use of memoization is $\Oh(n^2 \log n)$.

\begin{theorem}
An optimal skeleton tree for positive weights $w_1, w_2, \ldots, w_n$ can be constructed in $\Oh(n^2 \log n)$ time.\label{thm:square-algorithm}
\end{theorem}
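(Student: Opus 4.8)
The plan is to assemble the pieces developed just before the theorem statement into a correctness-plus-running-time argument, building on Theorem~\ref{thm:cubic-algorithm}. Correctness is inherited from the cubic algorithm: the recursion~(\ref{eq:main-recursion}) together with the reachability argument (every triple $(c_u,h,k)$ reached from $f(c_1,1,0)$ corresponds to an actual Huffman tree for $w_1,\ldots,w_n$, via Lemmas~\ref{lem:valid-huffman}, \ref{lem:class-struct}, \ref{lem:layer-struct}) already shows that $f(c_1,1,0)$ equals the minimum of $\sum_\ell\pop(q_\ell)$ over all q-sources of optimal prefix-free codes; by Lemma~\ref{lem:huffman-only} this is the true optimum over \emph{all} optimal codes, and by Lemma~\ref{lem:klein-algorithm} it is the leaf count of an optimal skeleton tree. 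The refined algorithm changes only \emph{how} the inner minimisation over $k'$ in~(\ref{eq:main-recursion}) is evaluated, so I would first argue that the new evaluation returns the same value.

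For that I would proceed in four steps. (i) Observe that the admissible $k'$ form a contiguous range $[k'_{min}..k'_{max}]$ and that $k'\mapsto\Delta-k'$, with $\Delta=k+\lambda_{u'}+\sum_{i=u+1}^{u'-1}\lambda_i$, maps it bijectively, reversing orientation, onto a contiguous range of arguments for $\pop$ in~(\ref{eq:summand}). (ii) Apply Lemma~\ref{lem:pop-subranges} to split this range into $m=\Oh(\log n)$ subranges of power-of-two length on each of which $\pop$ becomes additive, obtaining the identity~(\ref{eq:f-to-g}). (iii) Introduce $g(c_u,h,k_{max},d)=\min\{f(c_u,h,k_{max}-s)+\pop(s)\colon 0\le s<2^d\}$ and verify by induction on $d$ that the recurrence~(\ref{eq:main-recursion2}) computes it: the base case $d=0$ is the definition of $f$, and for $d>0$ the optimal offset $s$ either has its top bit unset, reducing to $g(\cdot,d-1)$, or set, reducing to $g$ at $k_{max}-2^{d-1}$ with $\pop(s)$ increased by one. (iv) Conclude that the value of~(\ref{eq:main-recursion}) equals $\min\{\pop(\Delta-i_j)+g(c_{u'},h',i_j,d_j)\colon 1\le j\le m\}$, i.e.\ is recovered from $\Oh(\log n)$ evaluations of $g$.

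For the running time I would count states. Since $h\le|c_u|$ and $\sum_u|c_u|=2n-1$, there are $\Oh(n)$ pairs $(c_u,h)$ and hence $\Oh(n^2)$ triples $(c_u,h,k)$; together with $0\le d\le\log_2 n$ this gives $\Oh(n^2\log n)$ quadruples $(c_u,h,k_{max},d)$ usable as arguments of $g$. Each such quadruple is processed in $\Oh(1)$ time via~(\ref{eq:main-recursion2}), using precomputed prefix sums $\sum_{i\le v}\lambda_i$ and the $\Oh(1)$-accessible class attributes; each of the $\Oh(n^2)$ evaluations of $f$ additionally spends $\Oh(\log n)$ on the splitting of Lemma~\ref{lem:pop-subranges}. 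With memoisation this totals $\Oh(n^2\log n)$, and the preprocessing (building one Huffman tree and computing all class attributes) is $\Oh(n\log n)$. Finally I would recover a minimising q-source $q_1,\ldots,q_n$ by standard backtracking through the memo tables and feed it to the algorithm of Klein et al.~\cite{KleinEtAl} (Lemma~\ref{lem:klein-algorithm}) to output the optimal skeleton tree and the corresponding code, all within the same bound.

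The main obstacle, I expect, is step (iii) combined with keeping the subrange bookkeeping of (i)--(ii) airtight: one must ensure that the orientation-reversing translation between ranges of $k'$ and ranges of $\pop$-arguments is handled consistently, that Lemma~\ref{lem:pop-subranges} is applied to the correct translated range $[\Delta-k'_{max}..\Delta-k'_{min}]$, and that $g$ never implicitly assumes additivity of $\pop$ outside the window $0\le s<2^{d_j}$ on which Lemma~\ref{lem:pop-subranges} guarantees it. Everything else — the $\Oh(n^2)$ state count, the $\Oh(1)$ per-state work, and the unchanged reachability/Huffman-tree correctness argument — is routine once the class attributes and prefix sums are in place.
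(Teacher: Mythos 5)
Your proposal is correct and follows essentially the same route as the paper: correctness is inherited from the cubic algorithm and Lemmas~\ref{lem:huffman-only} and~\ref{lem:klein-algorithm}, the inner minimisation over $k'$ is restructured exactly via the translated range $[\Delta-k'_{max}..\Delta-k'_{min}]$, Lemma~\ref{lem:pop-subranges}, the identity~(\ref{eq:f-to-g}), and the auxiliary function $g$ computed by~(\ref{eq:main-recursion2}), and the state count $\Oh(n^2)$ triples times $\Oh(\log n)$ values of $d$ with $\Oh(1)$ work per quadruple (plus $\Oh(\log n)$ splitting per evaluation of $f$) gives the claimed $\Oh(n^2\log n)$ bound, with the final skeleton tree recovered by backtracking and the construction of Klein et al. No gaps beyond what the paper itself leaves informal.
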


\bibliographystyle{splncs04}
\bibliography{csr-34}
\end{document}